\newtheorem{theorem}{Theorem}
\newenvironment{Figure}
  {\par\medskip\noindent\minipage{\linewidth}}
  {\endminipage\par\medskip}
\def\be{\begin{equation}}
\def\ee{\end{equation}}
\def\nn{\nonumber}
\def\l{\left}
\def\r{\right}
\def\ba{\begin{array}}
\def\ea{\end{array}}
\def\bea{\begin{eqnarray}}
\def\eea{\end{eqnarray}}
\def\bn{\begin{eqnarray*}}
\def\en{\end{eqnarray*}}
\def\p{\partial}
\def\f{\frac}
\def\bra{\langle}
\def\ket{\rangle}
\def\bfg{\begin{figure}}
\def\efg{\end{figure}}
\def\R{\mathbb{R}}
\def\C{\mathbb{C}}
\def\tr{\text{tr}}
\begin{document}
\title{Nonadiabatic holonomic one-qubit gates}
\author{Takumi Nitanda, Utkan G\"ung\"ord\"u, Mikio Nakahara\\
Kinki University\\
3-4-1 Kowakae, Higashi-Osaka, 577-8502 Japan}

\maketitle

\begin{abstract}
Adiabatic quantum gate implementation generally takes longer time,
which is disadvantageous in view of decoherence. In this report
we implement several essential one-qubit quantum gates nonadiabatically
by making use of a dynamical invariant associated with a Hamiltonian.
Moreover we require that these gates be holonomic, that is, the dynamical
phases associated with the gates vanish. Our implementation is
based on our recent work [J. Phys. Soc. Jpn. {\bf 83}, 034001 (2014)] and the gate parameters required
for the implementations are found by numerical optimization.
\end{abstract}

{\bf keywords:} nonadiabatic control, Aharonov-Anandan phase, holonomic
quantum gate.

\begin{multicols}{2}

\section{Introduction}
\subsection{Dynamical Invariants}

Let $H = H(t)\in M_n(\C)$ be a time-dependent Hamiltonian. A time-dependent Hermitian operator
$I = I(t) \in M_n(\C)$ is called a dynamical invariant 
(also known as the Lewis-Riesenfeld invariant \cite{lr}) if it satisfies
\be
i \frac{\partial I}{\partial t} = [H, I].
\ee
Let $|\psi(t) \ket$ be a solution of the time-dependent Schr\"odinger
equation 
\be\label{eq:sch}
i \f{d}{dt} |\psi(t) \ket = H |\psi(t)\ket.
\ee
Then $\bra \psi(t)|I |\psi(t)\ket$ is independent of time. 
In fact, 
\bn
\lefteqn{\f{d}{dt} \bra \psi(t)|I |\psi(t)\ket}\\
&=& \l(\f{d}{dt} \bra \psi(t)|\r)I |\psi(t)\ket +\bra \psi(t)|\dot{I}|\psi(t) \ket \\
& & + \bra \psi(t) |I \l(\f{d}{dt}|\psi(t) \ket\r)\\
&=& \bra \psi(t)|\l[i H I-i [H I-I H]- iIH\r]
|\psi(t) \ket\\
& =& 0.
\en

Let $\{\lambda_k\}$ be the set of eigenvalues of $I$ and 
$\{|\phi_k(t) \ket\}$ be the corresponding set of normalized
eigenvectors;
$I|\phi_k(t) \ket = \lambda_k |\phi_k(t) \ket$. It might seem
that $\lambda_k$ depends on time since $I$ does.
Observe, however, that
\bn
\dot\lambda_k &=& \f{d}{dt} \bra \phi_k(t)|I |\phi_k(t) \ket\\
&=& \l( \f{d}{dt} \bra \phi_k(t)|\r) I |\phi_k(t) \ket \\
& & + \bra \phi_k(t)|\dot{I}(t) |\phi_k(t) \ket\\
& & + \bra \phi_k(t)|I \l( \f{d}{dt} |\phi_k(t) \ket \r)\\
&=& \lambda_k \l( \f{d}{dt}\bra \phi_k(t)|\r) |\phi_k(t) \ket \\
& & -i \lambda_k \bra \phi_k(t)|(H-H)|\phi_k(t) \ket \\
& & + \lambda_k \bra \phi_k(t)| \l( \f{d}{dt}|\phi_k(t) \ket\r)\\
&=& \lambda_k \f{d}{dt} \l( \bra \phi_k(t)|\phi_k(t) \ket \r) = 0,
\en
where use has been made of the normalization condition 
$ \bra \phi_k(t)|\phi_k(t) \ket=1$.

The dynamical invariant
has the following spectral decomposition
\be
I = \sum_k \lambda_k|\phi_k(t)\ket \bra \phi_k(t)|,\quad \lambda_k \in \R.
\ee

\subsection{Solutions of the Schr\"odinger Equation}

Take $|\phi_k(0) \ket$ and consider a solution $|\psi_k(t)\ket$
of the Schr\"odinger equation $i\partial_t|\psi_k(t) \ket
= H |\psi_k(t) \ket$ such that
$|\psi_k(0)\ket = |\phi_k(0) \ket$. The solution $|\psi_k (t) \ket$
should not be confused with the $k$-th eigenvector of $H$. The
index $k$ simply states that the vector was initially the
eigenvector $|\phi_k(0) \ket$ of $I(0)$. 
\\
\\
\begin{theorem}
The solution $|\psi_k(t) \ket$ of the Schr\"odinger
equation (2) is given by 
\be
|\psi_k(t) \ket = e^{i \alpha_k(t)}|\phi_k(t) \ket
\ee
with
\be
\alpha_k(t) = \int_0^t \bra \phi_k(s)|[i \partial_s - H(s) ]|\phi_k(t) \ket ds.
\ee
\end{theorem}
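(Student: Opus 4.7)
The plan is to verify the ansatz directly: substitute $|\psi_k(t)\rangle = e^{i\alpha_k(t)}|\phi_k(t)\rangle$ into the Schr\"odinger equation and check it solves it. The initial condition $|\psi_k(0)\rangle = |\phi_k(0)\rangle$ is immediate since $\alpha_k(0)=0$, so the real content is establishing the differential equation.

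Differentiating the ansatz gives
\[
i\partial_t|\psi_k\rangle - H|\psi_k\rangle = e^{i\alpha_k}\bigl[(i\partial_t - H)|\phi_k\rangle - \dot\alpha_k|\phi_k\rangle\bigr],
\]
so the problem reduces to proving the eigenvalue-like identity
\be\label{eq:key}
(i\partial_t - H)|\phi_k(t)\rangle = \dot\alpha_k(t)\,|\phi_k(t)\rangle.
\ee
Once \eqref{eq:key} is in hand, the expression for $\alpha_k$ in the theorem follows by taking the inner product with $\langle\phi_k|$ and integrating in $t$, using $\langle\phi_k|\phi_k\rangle=1$.

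The key identity \eqref{eq:key} is where the invariant property does the work. My approach is to differentiate the spectral equation $I|\phi_k\rangle = \lambda_k|\phi_k\rangle$, invoke $\dot\lambda_k=0$ (proved earlier in the excerpt), and substitute $i\dot I = [H,I]$. After a short rearrangement this yields
\[
(\lambda_k - I)\bigl(H - i\partial_t\bigr)|\phi_k\rangle = 0,
\]
showing that $(H - i\partial_t)|\phi_k(t)\rangle$ lies in the $\lambda_k$-eigenspace of $I(t)$. Assuming the spectrum of $I$ is nondegenerate, this eigenspace is spanned by $|\phi_k(t)\rangle$ itself, so $(H - i\partial_t)|\phi_k\rangle$ must be a scalar multiple of $|\phi_k\rangle$, and projecting onto $|\phi_k\rangle$ identifies that scalar as $-\dot\alpha_k$, giving \eqref{eq:key}.

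The main obstacle is precisely this nondegeneracy caveat: if $\lambda_k$ has multiplicity greater than one, one only gets that $(H-i\partial_t)|\phi_k\rangle$ lies in the full eigenspace, and the off-diagonal couplings within that eigenspace need not vanish. To keep the scalar-phase ansatz valid in that situation, one must choose the eigenbasis $\{|\phi_k(t)\rangle\}$ within each degenerate subspace to diagonalize the connection $i\langle\phi_j|\partial_t|\phi_k\rangle - \langle\phi_j|H|\phi_k\rangle$, i.e.\ a parallel-transport gauge. For the one-qubit setting of the paper one can simply assume nondegeneracy, so I would state the theorem under that assumption and remark on the gauge fix otherwise.
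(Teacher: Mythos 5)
Your proposal is correct and follows essentially the same route as the paper: differentiate $I|\phi_k\rangle=\lambda_k|\phi_k\rangle$, use $\dot\lambda_k=0$ and $i\dot I=[H,I]$ to conclude that $(H-i\partial_t)|\phi_k\rangle$ has no component outside the $\lambda_k$-eigenspace, then identify the remaining scalar by projecting onto $|\phi_k\rangle$ and integrate. Your explicit nondegeneracy assumption (and the parallel-transport gauge remark for degenerate $\lambda_k$) makes a hypothesis precise that the paper leaves implicit, but the underlying argument is the same.
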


\begin{proof}
It follows from 
$I |\phi_k \ket = \lambda_k |\phi_k \ket$ that
\bn 
\lambda_k |\dot{\phi}_k \ket &=&  \dot{I}|\phi_k \ket + I |\dot{\phi}_k \ket \\
&=& -i [H,I]|\phi_k \ket +I|\dot{\phi}_k \ket \\
&=& -i \lambda_k H|\phi_k \ket + iIH|\phi_k \ket +I|\dot{\phi}_k \ket.
\en
Multiplying this from the left by $\bra \phi_p|$, we obtain
\bn 
 \lambda_k \bra \phi_p|\dot{\phi}_k \ket &=& -i \lambda_k \bra \phi_p|H|\phi_k \ket + i \lambda_p \bra \phi_p|H|\phi_k \ket \\
 & &+ \lambda_p\bra \phi_p|\dot{\phi}_k \ket \\
&=& (\lambda_p-\lambda_k) (\bra \phi_p|H|\phi_k \ket - i \bra \phi_p|\dot{\phi}_k \ket) \\
&=& 0.
\en
For $p \neq k$, we obtain
$$
\bra \phi_p|H|\phi_k \ket - i \bra \phi_p|\dot{\phi}_k \ket= 0.
$$
If the last equation held for $p = k$, we would have found
\bn
& &\sum_p |\phi_p \ket \bra \phi_p|H|\phi_k \ket =i \sum_p |\phi_p \ket \bra \phi_p|\dot{\phi}_k \ket \\
& & \to H |\phi_k \ket= i |\dot{\phi}_k \ket,
\en
which is not true in general. So let us try $ |\psi_k(t) \ket =
e^{i \alpha_k(t)}|\phi_k (t) \ket$ and require that $|\psi_k(t) \ket$
satisfies the Schr\"odinger equation;
$$
i\p_t |\psi_k \ket = -\dot{\alpha}_k e^{i \alpha_k}
|\phi_k \ket + e^{i \alpha_k} \partial_t|\phi_k \ket = H e^{i\alpha_k}
|\phi_k \ket.
$$
It then follows that
$
 \dot{ \alpha}_k(t) = \bra \phi_k(t)|\l( i \p_t - H \r)
|\phi_k(t) \ket$. Integrating this with respect to $t$, we obtain
$$ 
\alpha_k(t) = \int_0^t \bra \phi_k(s)|[i \partial_s - H(s) ]|\phi_k(s) \ket ds.
$$
\end{proof}
Let $|\psi (t)  \ket$ be an arbitrary solution of the Schr\"odinger equation. Since $\{|\phi_n(0) \ket\}$ is a complete set, $|\psi(0) \ket$
can be expanded as 
$$
|\psi(0)\ket = \sum_k c_k |\phi_k(0) \ket.
$$
From linearity, the solution at arbitrary $t >0$ is
\be
|\psi(t) \ket = \sum_k c_k e^{i\alpha_k(t)}|\phi_k(t) \ket,
\ee 
where $c_k$ is independent of time.

A few remarks are in order.
It is a common practice to write $|\psi(t) \ket$ as
$$
|\psi(t) \ket = U(t)|\psi(0)\ket, \ U(t) = {\mathcal T}e^{-i \int_0^t H(s) ds}.
$$
where $U(t)$ is the time-evolution operator. However, analytic evaluation of $U(t)$ is impossible, except for a few simple cases,
due to the time-ordering operation ${\mathcal T}$. In the above formalism, everything is found by solving the eigenvalue problem
$I|\phi_k(t) \ket = \lambda_k |\phi_k(t) \ket$ at a given instant
of time $t$, however, the difficultly lies in finding $I$.

Since $|\psi_k(0) \ket$ develops to $|\psi_k(t) \ket$ at time $t>0$,
the time-evolution operator $U(t)$ can be expressed as
\be
\sum_k |\psi_k(t) \ket \bra \psi_k(0) |
= \sum_k e^{i \alpha_k(t) }|\phi_k(t)\ket \bra \phi_k(0)|.
\ee

The evolution of $|\phi_k(t) \ket$ is transitionless since $U(t)|\phi_k(0)
\ket = e^{i \alpha_k(t)} |\phi_k(0)\ket$ for any $t>0$. The solution of
the Schr\"odinger equation always remains in the $k$-th eigenstate of
$I$ if $|\psi(0) \ket=|\phi_k(0) \ket$. On the other hand, $|\psi(t)\ket$
is not an eigenvector of $H$ and the time-evolution is nonadiabatic.

\subsection{Aharonov-Anandan Phase and Dynamical Invariants}

Let $U(T)$ be the time-evolution operator at a fixed time $T$ 
for a given Hamiltonian. $U(T)$ has the spectral decomposition
\be
U(t) = \sum_k e^{i \chi_k}|\chi_k \ket \bra \chi_k|,
\ee
where $U(T)|\chi_k \ket = e^{i \chi_k}|\chi_k\ket$ and $\bra \chi_k|
\chi_{k'} \ket = \delta_{kk'}$. Since
a unitary matrix is normal, the set of normalized eigenvectors
$\{|\chi_k \ket\}$ forms a complete set. 

Suppose the initial state of the wave function $|\psi(0)\ket$ is
$|\chi_k \ket$. Then we find
\bea
|\psi(T) \ket &=& U(T)|\psi (0) \ket = U(T)|\chi_k \ket \nonumber \\
&=& e^{i \chi_k} |\chi_k \ket= e^{i \chi_k}|\psi(0)\ket.
\eea 
By noting that $|\psi(T) \ket$ and $|\psi(0)\ket$ represent the
same vector in the projected Hilbert space $\C P^n =\C^n/{\rm U}(1)$, 
we find an eigenvector
of $U(T)$ executes a cyclic evolution in the projective Hilbert
space. Such a vector is called cyclic. 
In the U(1) fiber over $\{|\psi(0)\ket\}$,
$|\psi(0) \ket$ and $|\psi(T)\ket$ differ by a phase $e^{i \chi_k}$, 
which is called the Aharonov-Anandan phase \cite{aa}.

Let $|\phi(t) \ket$ be a closed
curve in the projective Hilbert space ($|\phi(T)\ket=|\phi(0) \ket$) 
such that 
$|\psi(t) \ket = e^{i \alpha(t) }|\phi(t) \ket$, where $|\psi(t)\ket$ is
a solution of the Schr\"odinger equation.
This $\alpha(t)$ is identified with the Lewis-Riesenfeld phase, meaning there is a dynamical
invariant $I$ whose eigenvector is $|\phi(t) \ket$. In fact
$U(T)$ can be written as
\bea
U(T) &=& \sum_k|\psi_k(T) \ket \bra \psi_k(0)| \nn \\
&=& \sum_k e^{i \alpha_k(T)}|\phi_k(T)\ket
\bra \phi_k(0)|\nn\\
&=&  \sum_k e^{i \chi_k}|\phi_k(0)\ket \bra \phi_k(0)|. 
\eea
Thus we have the following correspondences
\bn
|\phi_k(0) \ket &\leftrightarrow& |\chi_k \ket,\\
e^{i \alpha_k(T)} &\leftrightarrow& e^{i \chi_k}.
\en

Let
\be
\alpha_k(T)  = \int_0^T \bra \phi_k(t)|(i \partial_t - H)|\phi_k(t) \ket
dt
\ee
be the Lewis-Riesenfeld phase associated with the eigenvector $|\phi_k(t) \ket$ of $I$. 
The first term 
\be
\gamma_k^g = i \oint \bra \phi_k(t) | d|\phi_k(t) \ket
\ee
is reparameterization ($t \to \tau(t)$) invariant and geometric in nature
(geometric phase), while 
\be
\gamma_k^d = - \int_0^T \bra \phi_k(t)|H|\phi_k(t) \ket dt
\ee
is the dynamical phase.

When $\gamma_k^d = 0$ for all $k$, the time-evolution is called holonomic or geometric \cite{Ichikawa2012}.
A quantum gate satisfying this condition is called a holonomic (geometric) 
gate.

\section{Nonadiabatic Holonomic One-Qubit Gates}

For definiteness, let us consider \cite{nonad}
\be
H = \frac{1}{2} (\Omega \cos \omega t\ \sigma_x + \Omega \sin \omega t\ 
\sigma_y + \Delta\ \sigma_z).
\ee
It is easy to verify that 
\be
I =  \Omega \cos \omega t\ \sigma_x + \Omega \sin \omega t\ \sigma_y 
+ (\Delta-\omega) \ \sigma_z
\ee
is a dynamical invariant of $H$. The eigenvalues and eigenvectors of $I$ are
\be
\pm \lambda, \qquad |\phi_{\pm} (t) \ket = 
\begin{pmatrix}
e^{-i \omega t} \cos \theta_{\pm}\\
\sin \theta_{\pm} 
\end{pmatrix},
\ee
where $\lambda= \sqrt{\Omega^2+(\Delta-\omega)^2}$, 
$\cos \theta_{\pm} = \xi_{\pm}/\sqrt{1+\xi_{\pm}^2}$,
$\sin \theta_{\pm} = 1/\sqrt{1+\xi_{\pm}^2}$ with
$\xi_{\pm} = [(\Delta-\omega)\pm
\lambda]/\Omega$. The Lewis-Riesenfeld phases are readily evaluated as
\be
\alpha_{\pm} (t) = (\omega \mp \lambda )t/2.
\ee
Note that
$H, I$ and $|\phi_{\pm}(t) \ket$ are cyclic with a period 
$T=2\pi/\omega$.

Let us require that $U(T)$ is a holonomic gate, that is
\be
\gamma_\pm^d =  -\int_0^T \bra \phi_{\pm}(t)|H |\phi_{\pm}(t) \ket dt = 0.
\ee
This is satisfied if and only if 
\be
\Omega^2+\Delta (\Delta - \omega)=0. 
\ee
In fact, this condition not only satisfies $\gamma_\pm^d=0$
but also a stronger condition that the integrand
$\bra \phi_{\pm}(t)|H |\phi_{\pm}(t) \ket$ vanishes for 
$\forall t \in [0,T]$. 
We find from this condition that $\Delta \in [0,\omega]$, from which
we also find $\Omega^2 \sim \Delta \omega$. Adiabaticity cannot be attained
under this condition and hence such a gate cannot be realized within the adiabatic regime.

When the above conditions are met, the resulting gate is \cite{nonad}
\bea\label{eq:ub}
U_{\beta}(T) &=& \sum_{\pm} e^{i \alpha_{\pm} (T)}
 |\phi_{\pm}(0)\ket \bra \phi_{\pm}(0)| \nn\\
&=& - e^{i \pi \sin \beta [-\cos \beta \sigma_x + \sin \beta \sigma_z]},
\eea
where $\cos^2 \beta =\Delta/\omega, \beta \in [0, \pi/2]$.

$U_{\beta}(T)$ generates a 1-dimensional trajectory in SU(2) manifold as shown in
Fig.~1.

\begin{Figure}
\label{fig1}
\begin{center}
\includegraphics[width=8cm]{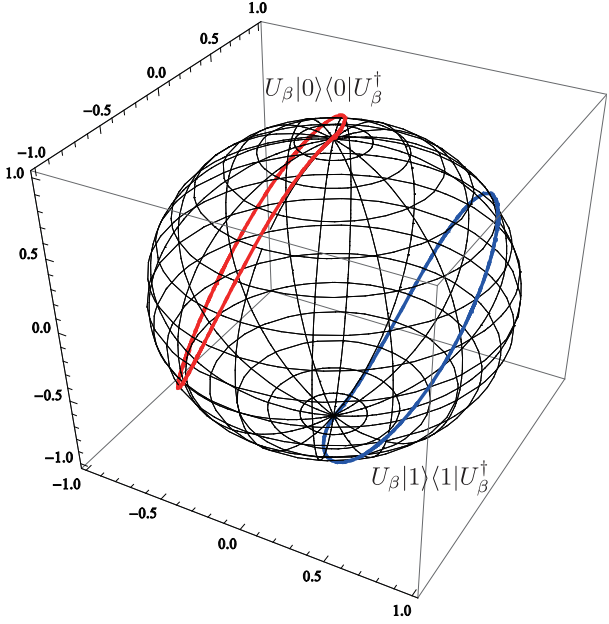}
\end{center}
\captionof{figure}{Trajectories of the Bloch vectors corresponding to
$U_{\beta}|0\ket \bra 0|U_{\beta}^{\dagger}$ and 
$U_{\beta}|1\ket \bra 1|U_{\beta}^{\dagger}$, $\beta \in
[0, \pi/2]$.}
\end{Figure}

By noting that
$[U_{\beta_1}(T), U_{\beta_2}(T)] \neq 0$ in general, the set $\{U_{\beta}(T)\}$ generates
all SU(2) group elements and hence forms a universal set of one-qubit gates.

\section{Examples}
In this section, we consider several important one-qubit gates. Since $U_{\beta}$ given
in Eq.~(\ref{eq:ub}) implements a one-dimensional subset of SU(2),
we need to employ several $U_{\beta_i}$ to implement arbitrary one-qubit
gates.

In what follows, we list $\{\beta_i\} = \{\beta_1, \ldots, \beta_N \}$ with the convention that $\beta_1$ acts first and $\beta_N$ acts last, and we give the gate fidelity $\mathcal F= \tr(U^\dagger U_\text{ideal})/
\tr(U_{\rm ideal}^\dagger U_{\rm ideal})$.
The numerical results below are high-fidelity implementations of the desired gates.

\subsection{NOT gate}
The NOT gate
\begin{align}
e^{i \pi /2}
\begin{pmatrix}
0&1 \\
1&0
\end{pmatrix}
\end{align}
can be realized by using 4 gates with $\{\beta_i\} =$ \{0.423, 0.680, 0.236, 0.222\}. The fidelity is $0.99999999990$.

\subsection{Hadamard gate}
A Hadamard gate 
\begin{align}
\frac{e^{i \pi /2}}{\sqrt{2}}
\begin{pmatrix}
1&1 \\
1&-1
\end{pmatrix}
\end{align}
with good fidelity $0.99999999791$ requires 7 gates, with $\{\beta_i\}$ given as \{0.331, 0.783, 0.300, 0.926, 0.174, 0.851, 0.347\}.

\subsection{Phase gate}
The phase gate
\begin{align}
e^{-i \pi /4}
\begin{pmatrix}
1&0 \\
0&i
\end{pmatrix}
\end{align}
implemented using 4 gates $\{\beta_i\} = $ \{0.827, 0.102, 0.287, 0.777\} has the fidelity $0.99999999993$.

\subsection{$\pi/8$-gate}
Similar to the NOT gate, $\pi/8$-gate
\begin{align}
e^{-i \pi /8}
\begin{pmatrix}
1&0 \\
0&e^{i \pi /4}
\end{pmatrix}
\end{align}
can be implemented using 3 elementary gates with $\{\beta_i\} = $ \{0.788, 0.514, 0.788\} and fidelity $0.99999999996$.

\section{Summary}

A review of holonomic gates implemented by using
the dynamical invariants is given. 
An interesting relation between the dynamical
phase and the Aharonov-Anandan phase is clarified.
We have explicitly shown that important one-qubit gates can be
implemented by combining holonomic quantum gates.

Analysis of the robustness of our one-qubit gates
under noise is an important issue
and will be published elsewhere.

\section*{Acknowledgement}

We are grateful to Yasushi Kondo for careful reading of
the manuscript.
UG and MN are grateful to the Japan Society for the
Promotion of Science (JSPS) for partial support from a
Grant-in-Aid for Scientific Research (Grant No. 24320008).
MN also thanks JSPS for a Grant-in-Aid for Scientific
Research (Grant No. 23540470). UG acknowledges the
financial support of the Ministry of Education, Culture,
Sports, Science and Technology (MEXT) Scholarship for
foreign students. 

\end{multicols}

\end{document}